\newlength\figureheight 
\newlength\figurewidth 
\newcommand{\e}[1]{{\mathbb E}\left[ #1 \right]}
\def\delequal{\mathrel{\ensurestackMath{\stackon[1pt]{=}{\scriptstyle\Delta}}}}
\DeclarePairedDelimiterX\MeijerM[3]{\lparen}{\rparen}%
{\begin{smallmatrix}#1 \\ #2\end{smallmatrix}\delimsize\vert\,#3}
\newcommand\MeijerG[8][]{%
  G^{\,#2,#3}_{#4,#5}\MeijerM[#1]{#6}{#7}{#8}}
\newcommand\MeijerG*[7]{%
  G^{\,#1,#2}_{#3,#4}\MeijerM*{#5}{#6}{#7}}
\newcommand\Mark[1]{\textsuperscript#1}
\begin{document}
%
\title{Partial Relay Selection For Hybrid RF/FSO Systems with Hardware Impairments}

\author{Elyes Balti\Mark{1}, Mohsen Guizani\Mark{1}, Bechir Hamdaoui\Mark{2} and Yassine Maalej\Mark{1}\\
\Mark{1}University of Idaho, USA,  \Mark{2}Oregon State University, USA}

\maketitle

\begin{abstract}
In this paper, we investigate the performance analysis of dual hop relaying system consisting of asymmetric Radio Frequency (RF)/Free Optical Space (FSO) links. The RF channels follow a Rayleigh distribution and the optical links are subject to Gamma-Gamma fading. We also introduce impairments to our model and we suggest Partial Relay Selection (PRS) protocol with Amplify-and-Forward (AF) fixed gain relaying. The benefits of employing optical communication with RF, is to increase the system transfer rate and thus improving the system bandwidth. Many previous research attempts assuming ideal hardware (source, relays, etc.) without impairments. In fact, this assumption is still valid for low-rate systems. However, these hardware impairments can no longer be neglected for high-rate systems in order to get consistent results. Novel analytical expressions of outage probability and ergodic capacity of our model are derived taking into account ideal and non-ideal hardware cases. Furthermore, we study the dependence of the outage probability and the system capacity considering, the effect of the correlation between the outdated CSI (Channel State Information) and the current source-relay link, the number of relays, the rank of the selected relay and the average optical Signal to Noise Ratio (SNR) over weak and strong atmospheric turbulence. We also demonstrate that for a non-ideal case, the end-to-end Signal  to Noise plus Distortion Ratio (SNDR) has a certain ceiling for high SNR range. However, the SNDR grows infinitely for the ideal case and the ceiling caused by impairments no longer exists. Finally, numerical and simulation results are presented.
\end{abstract}

\begin{IEEEkeywords}
Hardware impairments, SNDR, amplify-and-forward, partial relay selection, outage probability, ergodic capacity.
\end{IEEEkeywords}

\IEEEpeerreviewmaketitle

\section{Introduction}
Optical wireless communication technology has emerged as a promising solution to assist the Radio Frequency part of a relaying system that reaches the bottleneck in terms of bandwidth efficiency, power consumption and transfer rate. Such a system is called a hybrid RF/FSO. The vast majority of research work in this area investigated the performance of such systems as the outage probability, bit error rate and ergodic capacity for many scenarios, such as systems with a single relay \cite{2} or multiple relays \cite{16}, \cite{20} employing different relaying schemes like Amplify-and-Forward (AF) \cite{13}, \cite{14}, \cite{15}, Decode-and-Forward (DF) \cite{16}, \cite{17} and Quantize-and-Forward (QF) \cite{18}, \cite{19}. Also, mixed RF/FSO systems with multiple relays employ many relay selection protocols like best and opportunistic relay selection \cite{21}, \cite{22}, \cite{28}, partial relay selection \cite{21}, \cite{22}, \cite{6} (selecting the relay based on the knowledge of RF or FSO CSI), Distributed Switch and Stay protocol \cite{28}, \cite{23} (selecting the relay that belongs to the shortest RF/FSO link) and all-active relaying \cite{23} which consists of transmitting simultaneously on the overall links. Systems employing the latest selection protocol suffer from the problem of synchronization at the reception since it employs optical signals, that is why such a system configuration is no longer used in practice.\\
In reality, the channels are time-varying and so the instantaneous CSI used for selecting the relay is outdated because of slow feedback propagation from the relays to the source or the destination. In this case, the selection of the best relay is of low probability. Hence, the system performance will deteriorate due to the outdated CSI.\\
Although these works suggested systems without taking into account hardware impairments, this assumption is still valid for low-rate systems. However, hybrid RF/FSO systems are characterized by high transfer rates due to the optical contributions so the assumption of neglecting hardware impairments is no longer valid since for high SNR these impairments significantly affect the system performance.\\
At the practical level, hardware suffers from many impairment types like I/Q imbalance \cite{11} which mitigates the magnitude and shifts the signal phase. Other impairments like High Power Amplifier (HPA) non linearities \cite{12}, \cite{hpa} cause the signal distortion. Qi \textit{et al.} \cite{24} concluded that the hardware impairments have a destructive effect on the system performance. In addition, \cite{1} introduced a general model of hardware impairment accounting for many types of impairment unlike \cite{11}, \cite{12} that considered only one type of impairment. Although Bjornson \textit{et al.} modeled the overall impairments into one aggregate type in \cite{1}, \cite{7}, they applied this model on fully RF systems with a single relay.
Our contribution is to consider this aggregate impairment model with a hybrid RF/FSO system of multiple relays employing Amplify-and-Forward (AF) with fixed gain relaying. Our system also employs PRS protocol based on the knowledge of the first hop to select the qualified relay to forward the signal to the destination. To the best of our knowledge, we are the first research group to employ hardware impairments with a mixed RF/FSO system of multiple relays employing partial relay selection based on the knowledge of the CSI of the first hop.\\
The rest of this paper is organized as follows. Section II describes the system model. The outage probabiliy and ergodic capacity analysis are presented in section III. Section IV presents the numerical results and their discussions. The final section presents some concluding remarks and future research directions.
\section{System Model}
Our system is composed of source (S), destination (D) and $N$ parallel relays linked to (S) and (D). Those relays employ AF with fixed gain and we refer to the PRS with outdated CSI to select the relay of rank $m$ to forward the communication. In addition, we assume Intensity Modulation and Direct Detection (IM/DD) at the reception.\\
For each transmission, the source (S) receives local feedbacks from $N$ relays and arranges the received instantaneous SNR ($\tilde{\gamma}_{1(i)}$ for $i$ = $1$,\ldots $N$) of the first hop in an increasing order of magnitude as follows: $\tilde{\gamma}_{1(1)}\leq\tilde{\gamma}_{1(2)}\leq \ldots \leq\tilde{\gamma}_{1(N)}$.\\
The best scenario is to select the best relay ($m = N$) with the highest first hop SNR. However, the best relay is not always available for forwarding, in this case the source will select the next best available relay. Due to the imperfect estimation of the CSI, the outdated CSI used for PRS $\tilde{\gamma}_{1(m)}$ and the actual RF CSI $\gamma_{1(m)}$ used for forwarding the signal are correlated with correlation coefficient $\rho$.  
The received signal at the $m$-th relay is given by:\\
\begin{equation}
y_{1(m)} = h_m (s + \eta_1) + \nu_1
\end{equation}
where $h_m$ is the RF fading between (S) and $R_m$, $s$ is the information signal, $\eta_1$ $\backsim$ $\mathcal{CN}$ (0, $\kappa^2_{1}P_1$) is the distortion noise at the source (S), $\kappa_{1}$ is the impairment level parameter in (S), $P_1$ is the average transmitted power from (S) and $\nu_1$  $\backsim$ $\mathcal{CN}$ (0, $\sigma^2_1$) is the AWGN of the RF channel.\\
After reception of the signal $y_{1(m)}$, relay $R_m$ assists with amplification gain $G$ depending on the statistical channel fading of the first hop. According to [1, eq.~(11)], the gain is given by:
\begin{equation}
G^2 \delequal \frac{P_2}{P_1\e{|h_m|^2}(1+\kappa^2_1)+\sigma^2_1}
\end{equation}
where $P_2$ is the average transmitted power from the relay and $\e{.}$ is the expectation operator.\\
The instantaneous SNR of the first hop between (S) and relay $R_m$ can be written as:
\begin{equation}
\gamma_{1(m)} = \frac{|h_m|^2P_1}{\sigma^2_1} = |h_m|^2\mu_1
\end{equation}
where $\mu_1 = \frac{P_1}{\sigma^2_1}$ is the average SNR of the first hop.\\
After the amplification at the relay level, the optical part of the system modulates the received electrical signal by an optical subcarrier using the SIM (Subcarrier Intensity Modulation) technique. The optical signal at the relay $R_m$ is given by:
\begin{equation}
y_{opt(m)} = G(1+\eta_e)y_{1(m)}
\end{equation}
where $\eta_e$ is the electrical-to-optical conversion coefficient.
At the reception (D), our system employs direct detection by eliminating the signal direct component and converts it to the electrical one. The photocurrent after conversion can be expressed as follows:
\begin{equation}
y_{2(m)} = \eta_o I_m ( G(1+\eta_e)(h_m (s + \eta_1) + \nu_1) + \eta_2 ) + \nu_2 
\end{equation}
where $\eta_o$ is the optical-to-electrical conversion, $I_m$ is the optical irradiance between $R_m$ and (D), $\eta_{2}$ $\backsim$ $\mathcal{CN}$ (0, $\kappa^2_{2}P_2$) is the distortion noise at the relay $R_m$, $\kappa_{2}$ is the impairment level parameter at $R_m$ and $\nu_2$  $\backsim$ $\mathcal{CN}$ (0, $\sigma^2_2$) is the AWGN of the optical channel.\\ 
The instantaneous SNR of the second hop between $R_m$ and (D) can be written as:
\begin{equation}
\gamma_{2(m)} = \frac{|I_m|^2\eta_{o}^2P_2}{\sigma^2_2} = |I_m|^2\mu_2
\end{equation}
where $\mu_2 = \frac{\eta_{o}^2P_2}{\sigma^2_2}$ is the average electrical SNR of the second hop. According to eq. (8) in \cite{avg}, the average SNR of the second hop $\e{\gamma_{2(m)}}$ is related to the average electrical SNR $\mu_2$ by the following mathematical expression:
\begin{equation}
\mu_2 = \frac{\alpha\beta\e{\gamma_{2(m)}}}{(\alpha+1)(\beta+1)}
\end{equation}
The end-to-end Signal to Noise plus Distortion Ratio (SNDR) can be expressed as:
\begin{equation}
\gamma_{e2e} = \frac{|h_m|^2 |I_m|^2}{\delta|h_m|^2|I_m|^2 + |I_m|^2(1 + \kappa^2_{2})\frac{\sigma^2_1}{P_1} + \frac{\sigma^2_2}{P_1 G^2}}
\end{equation}
After mathematical manipulations, the end-to-end SNDR is given by:
\begin{equation}
\gamma_{e2e} = \frac{\gamma_{1(m)}\gamma_{2(m)}}{\delta\gamma_{1(m)}\gamma_{2(m)} + (1 + \kappa^2_{2})\gamma_{2(m)} + C}
\end{equation}
where $\delta \delequal \kappa^2_{1} + \kappa^2_{2} + \kappa^2_{1}\kappa^2_{2}$ and the constant $C = \e{\gamma_{1(m)}}(1+\kappa_1^2) + 1$.\\
Since the first hop is subject to Rayleigh fading, the PDF of the relative instantaneous SNR $\gamma_{1(m)}$ is given by [13, eq.~(8)]. After some mathematical manipulations, the PDF of $\gamma_{1(m)}$ is expressed as follows:
\begin{equation}
\begin{split}
f_{\gamma_{1(m)}}(x) = m{N \choose m}\sum_{n=0}^{m-1} {m-1 \choose n} \frac{(-1)^n}{\mu_1}~~~~~\\
\times~\frac{1}{(N-m+n)(1-\rho)+1}~e^{-\frac{(N-m+n+1)x }{[(N-m+n)(1-\rho)+1]\mu_1}}
\end{split}
\end{equation}
The Cumulative Distribution Function (CDF) of $\gamma_{1(m)}$ is given by:
\begin{equation}
F_{\gamma_{1(m)}}(x) = \int_{0}^{x} f_{\gamma_{1(m)}}(t)~dt
\end{equation}
After mathematical manipulations, the CDF of $\gamma_{1(m)}$ can be expressed as follows:
\begin{equation}
\begin{split}
 F_{\gamma_{1(m)}}(x) = 1 - m{N \choose m}\sum_{n=0}^{m-1} {m-1 \choose n} \\ \times \frac{(-1)^n}{N-m+n+1}  e^{-\frac{(N-m+n+1)x }{[(N-m+n)(1-\rho)+1]\mu_1}}
\end{split}
\end{equation}
The constant $C$ that appears in the formula of the end-to-end SNDR depends on $\e{\gamma_{1(m)}}$ which is obtained by:
\begin{equation}
\begin{split}
\e{\gamma_{1(m)}} = \int_{0}^{\infty} x f_{\gamma_{1(m)}}(x)  dx = m{N \choose m} \sum_{n=0}^{m-1} {m-1 \choose n} \\ \times   \frac{(-1)^n[(N-m+n)(1-\rho)+1] \mu_1}{(N-m+n+1)^2}~~~~~~~~
\end{split}
\end{equation}
Regarding the second hop, the optical links are affected by atmospheric turbulence which are modeled by Gamma-Gamma distribution. In this case, the PDF of the instantaneous SNR $\gamma_{2(m)}$ is given by:
\begin{equation}
f_{\gamma_{2(m)}}(x) = \frac{(\alpha\beta)^{\frac{\alpha+\beta}{2}} x^{\frac{\alpha+\beta }{4}-1}}{\Gamma(\alpha)\Gamma(\beta)\mu_2^{\frac{\alpha+\beta}{4}}} K_{\alpha-\beta}\left(2\sqrt{\alpha\beta\sqrt{\frac{x}{\mu_2}}}\right)
\end{equation}
where $K_{\nu}(.)$ is the $\nu$-th order modified Bessel function of the second kind. The parameters $\alpha$ and $\beta$ characterize respectively small-scale and large-scale of scattering process in the atmospheric environment. Hence, the parameters $\alpha$ and $\beta$ can be given by:
\begin{equation}
\alpha = \left( \exp\left[ \frac{0.49 \sigma_R^2}{(1+1.11\sigma_R^{\frac{12}{5}})^{\frac{7}{6}}}\right] -1\right)^{-1}
\end{equation}

\begin{equation}
\beta = \left( \exp\left[ \frac{0.51 \sigma_R^2}{(1+0.69\sigma_R^{\frac{12}{5}})^{\frac{5}{6}}}\right] -1\right)^{-1}
\end{equation}
where $\sigma_R^2$ is called Rytov variance and it is a metric of atmospheric turbulence intensity.
\section{Performance Analysis}
In this section, we provide analysis of the system performance in terms of outage probability and ergodic capacity. This analysis consists of deriving the analytical expressions of outage probability and system capacity and studying their behavior for a high SNR range. Then, we will
show the convergence of the end-to-end SNDR to an accurate ceiling for the case of non-ideal hardware.
\subsection{Outage Probability Analysis}
The outage probability is defined as the probability that the overall instantaneous SNDR falls below a given outage threshold $\gamma_{\text{th}}$. Its formula can be expressed as follows:
\begin{equation}
\begin{split}
P_{\text{out}} \delequal \text{Pr}(\gamma_{e2e} < \gamma_{\text{th}})~~~~~~~~~~~~~~~~~~~~~~~~~~~~~~~~~~~~~~\\ 
= \text{Pr}\left(\frac{\gamma_{1(m)}\gamma_{2(m)}}{\delta\gamma_{1(m)}\gamma_{2(m)} + (1 + \kappa^2_{2})\gamma_{2(m)} + C} < \gamma_{\text{th}}\right)   
\end{split}
\end{equation}
where Pr(.) is the probability notation. After mathematical manipulations, the outage probability can be re-written as:
\begin{equation}
\begin{split}
P_{\text{out}} = \int_{0}^{\infty} \text{Pr}\left(\gamma_{1(m)} < \frac{(1+\kappa^2_{2})\gamma_{\text{th}}\gamma_{2(m)} + C\gamma_{\text{th}}}{(1-\delta\gamma_{\text{th}})\gamma_{2(m)}}~|~\gamma_{2(m)} \right) \\ 
\times~\text{Pr}(\gamma_{2(m)})~d\gamma_{2(m)}~~~~~~~~~~~~~~~~~~~~~~~~~~~
\end{split}
\end{equation}
Since the RF and FSO fadings are independent, the above outage expression can be written as follows:
\begin{equation}
\begin{split}
P_{\text{out}} = \int_{0}^{\infty} F_{\gamma_{1(m)}}\left(\frac{(1+\kappa^2_{2})\gamma_{\text{th}}}{1-\delta\gamma_{\text{th}}} + \frac{C\gamma_{\text{th}}}{(1-\delta\gamma_{\text{th}})\gamma_{2(m)}}\right)~~~~~~~~
\\\times f_{\gamma_{2(m)}}(\gamma_{2(m)})~d\gamma_{2(m)}~~~~~~~~~~~~~~~~~~~~~
\end{split}
\end{equation}
Before deriving the outage probability formula, a \textbf{necessary condition} must be set. This condition is that the outage threshold must be strictly inferior to $\frac{1}{\delta}$, i.e., ($\gamma_{\text{th}}<\frac{1}{\delta}$). Hence, a piecewise form of the outage probability is written as: \\
$$
P_{\text{out}} =
\begin{cases}
\text{eq}.~(20), & \text{if}~\gamma_{\text{th}}<\frac{1}{\delta} \\
1 & \text{otherwise}
\end{cases}
$$

To derive the outage probability formula, we should first substitute eqs. (12) and (14) into eq. (19) and then using the identities in \cite{wolfram}, we finally get the expression in terms of Meijer's G function as shown by eq.~(20):
\begin{equation}
\begin{split}
P_{\text{out}} = 1 - \frac{2^{\alpha+\beta-2}m{N \choose m}}{\pi\Gamma(\alpha)\Gamma(\beta)}\sum_{n=0}^{m-1} {m-1 \choose n}  \frac{(-1)^n}{N-m+n+1}~~~~~~~~~~~~~~~~~~~~~~~~~~~~~~~~~~~~~~\\ \times~\exp\left(-\frac{(N-m+n+1)(1+\kappa^2_{2})\gamma_{\text{th}}}{[(N-m+n)(1-\rho)+1](1-\delta\gamma_{\text{th}})\mu_1}\right)~~~~~~~~~~~~~~~~~~~~~~~~~~~~~~~~~~~~\\\times~\MeijerG[\Bigg]{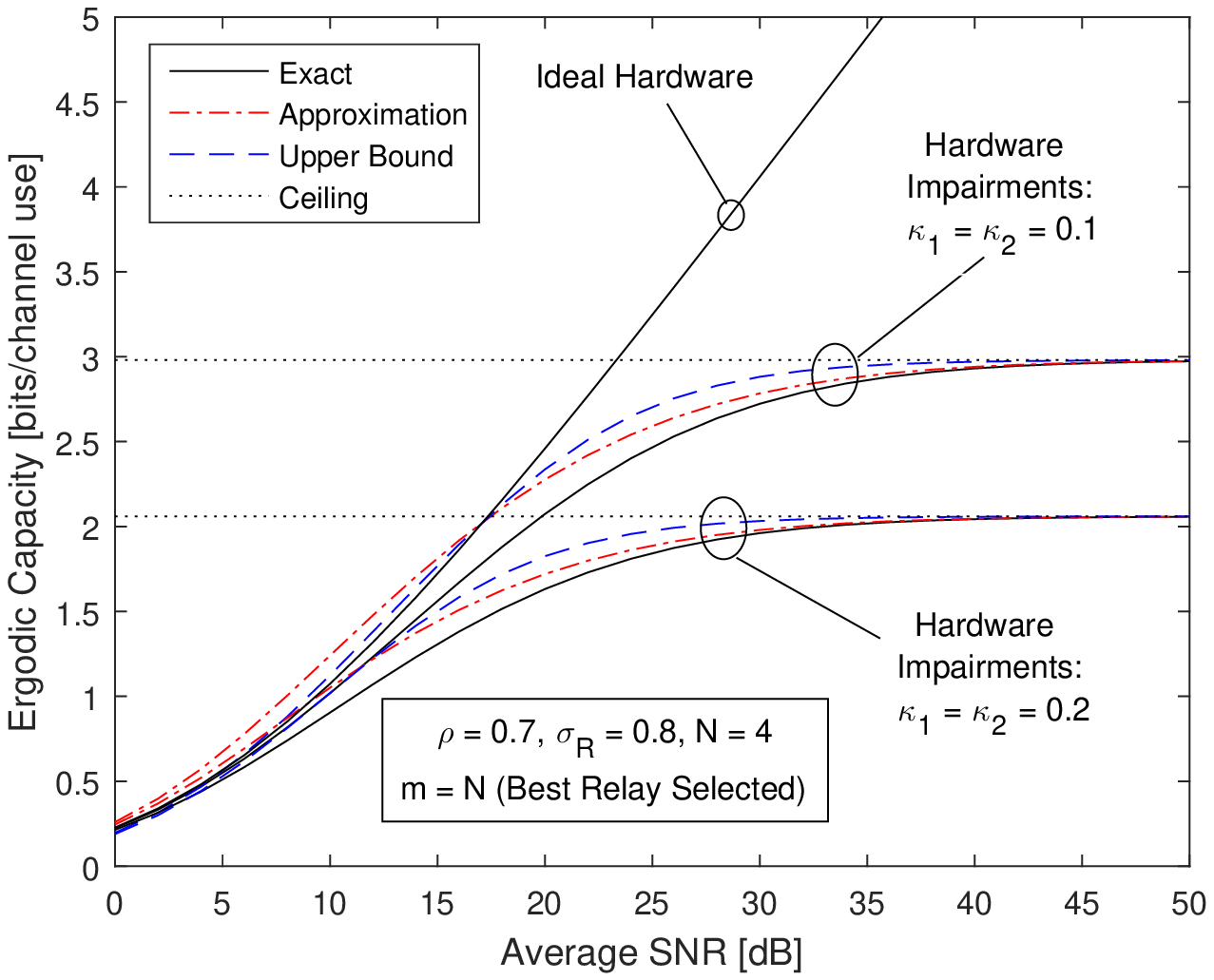}{0}{0}{5}{-}{\frac{\alpha}{2}, \frac{\alpha+1}{2}, \frac{\beta}{2}, \frac{\beta+1}{2}, 0}{\zeta}~~~~~~~~~~~~~~~~~~~~~~~~~~~~~~~~~~~~~~~~~~~~~~~~~~~~~~~~~~~~~~~~
\end{split}
\end{equation}
where $\zeta$ is given by:
\begin{equation}
\zeta = \frac{(\alpha\beta)^2C\gamma_{\text{th}}(N-m+n+1)}{16\mu_1\mu_2(1-\delta\gamma_{\text{th}})[(N-m+n)(1-\rho)+1]}
\end{equation}
Note that the CDF of $\gamma_{1(m)}$ in eq.~(19) is defined only for a positive argument, i.e., ($\gamma_{\text{th}}<\frac{1}{\delta}$), that is why the necessary condition must be set to define the expression of the outage probability. 
\subsection{Ergodic Capacity Analysis}
The channel capacity, expressed by bits/channel use, is obtained by [1, eq.~(32)] as follows:
\begin{equation}
\bar{C} \delequal \frac{1}{2}\e{\log_{2}(1+\gamma_{e2e})}
\end{equation}
Since the transmission is done in two time slots, the factor $\frac{1}{2}$ appears in the channel capacity formula. This formula can be computed by integration referring to the probability density function of $\gamma_{e2e}$. However, deriving a closed form of the channel capacity in our case is very complex if not impossible. To overcome this problem, we should refer to the approximation given by [1, eq.~(35)] 
\begin{equation}
\e{\log_{2}\left(1+\frac{\psi}{\omega}\right)} \approx \log_{2}\left(1 + \frac{\e{\psi}}{\e{\omega}}\right)
\end{equation}
After some mathematical manipulations, the ergodic capacity is expressed as follows:
\begin{equation}
\bar{C} \approx \frac{1}{2} \log_{2}\left(1 + \varrho \right)
\end{equation}
where $\varrho$ is given by:
\begin{equation}
\varrho = \frac{\e{\gamma_{1(m)}}\e{\gamma_{2(m)}}}{\delta\e{\gamma_{1(m)}}\e{\gamma_{2(m)}}+\e{\gamma_{2(m)}}(1+\kappa_2^2)+C}
\end{equation}
The term $\e{\gamma_{2(m)}}$ is defined as:
\begin{equation}
\e{\gamma_{2(m)}} = \frac{(\alpha+1)(\beta+1)\mu_2}{\alpha\beta}    
\end{equation}
To quantify the ergodic capacity, it is possible to derive a closed-form of an upper bound using the following theorem.
\newtheorem{theorem}{Theorem}
\begin{theorem}
For Asymmetric (Rayleigh/Gamma-Gamma) fading channels, the ergodic capacity $\bar{C}$ with AF relaying and non-ideal hardware has an upper bound defined by:
\begin{equation}
    \bar{C} \leq \frac{1}{2}\log_{2}\left(1 + \frac{\mathcal{J}}{\delta\mathcal{J} + 1}\right)
\end{equation}
\end{theorem}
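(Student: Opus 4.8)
The plan is to work directly from the approximate capacity expression in eq.~(25): since $\bar{C} \approx \frac{1}{2}\log_2(1+\varrho)$ and the map $t \mapsto \frac{1}{2}\log_2(1+t)$ is strictly increasing, it suffices to produce an upper bound on the single scalar $\varrho$ defined in eq.~(26). I expect the constant $\mathcal{J}$ in the statement to denote the first-hop mean $\e{\gamma_{1(m)}}$, whose closed form is already available in eq.~(14); with that identification the whole argument reduces to a monotone bounding of $\varrho$ in which the second hop is pushed to its most favourable regime.

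First I would factor the common term $\e{\gamma_{2(m)}}$ out of the numerator and denominator of eq.~(26), rewriting
\[
\varrho = \frac{\mathcal{J}}{\delta\mathcal{J} + (1+\kappa_2^2) + C/\e{\gamma_{2(m)}}}, \qquad \mathcal{J} = \e{\gamma_{1(m)}}.
\]
This form isolates the only two places where the residual impairment-and-noise contributions survive in the denominator, namely the factor $1+\kappa_2^2$ and the ratio $C/\e{\gamma_{2(m)}}$.

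Next I would lower-bound the denominator. Because $C = \e{\gamma_{1(m)}}(1+\kappa_1^2)+1 \geq 1 > 0$ and $\e{\gamma_{2(m)}} > 0$, the term $C/\e{\gamma_{2(m)}}$ is strictly positive and may be discarded; this is exactly the high optical-SNR limit $\e{\gamma_{2(m)}} \to \infty$. Discarding also the impairment excess through $1+\kappa_2^2 \geq 1$ leaves $\delta\mathcal{J}+1$ in the denominator, whence $\varrho \leq \mathcal{J}/(\delta\mathcal{J}+1)$. Applying the increasing map $\frac{1}{2}\log_2(1+\cdot)$ then reproduces eq.~(29).

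The individual steps are elementary; the only point needing genuine care is the claim that sending $\e{\gamma_{2(m)}}$ to infinity truly maximises $\varrho$ rather than being a loose heuristic. I would settle this by computing $\partial\varrho/\partial\e{\gamma_{2(m)}}$, whose numerator collapses to $\mathcal{J}\,C > 0$ after the quotient rule; this confirms that $\varrho$ is monotonically increasing in the second-hop mean, so the proposed bound is approached, and is in fact asymptotically tight, precisely in the strong-second-hop (high optical-SNR, impairment-free) regime $\kappa_2 \to 0$, $\e{\gamma_{2(m)}} \to \infty$.
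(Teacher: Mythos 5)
Your proposal has two genuine problems, and together they mean it does not prove the stated theorem. First, you misidentify $\mathcal{J}$. The theorem's $\mathcal{J}$ is not the first-hop mean $\e{\gamma_{1(m)}}$ from eq.~(14); the paper defines it as $\mathcal{J}=\e{\Theta}$ with $\Theta = \frac{\gamma_{1(m)}\gamma_{2(m)}}{(1+\kappa_2^2)\gamma_{2(m)}+C}$, a joint functional of both hops whose evaluation is precisely why the closed form in eq.~(30) requires a Meijer $G$-function integral against the Gamma-Gamma density. If $\mathcal{J}$ were simply $\e{\gamma_{1(m)}}$, eq.~(30) would be pointless. Since $\Theta \leq \gamma_{1(m)}/(1+\kappa_2^2) \leq \gamma_{1(m)}$ and $t\mapsto t/(\delta t+1)$ is increasing, the bound you derive is a strictly weaker statement than the theorem. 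Second, and more fundamentally, your entire chain passes through the approximation of eqs.~(24)--(25): an inequality on $\varrho$ cannot be promoted to an inequality on the true ergodic capacity $\bar{C}=\frac{1}{2}\e{\log_2(1+\gamma_{e2e})}$, because "$\approx$" carries no sign information. Your monotonicity computation $\partial\varrho/\partial\e{\gamma_{2(m)}}>0$ is fine as far as it goes, but it only tightens a bound on the approximation, not on $\bar{C}$.

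The missing idea is Jensen's inequality applied to the exact SNDR. Observe that
\begin{equation}
\gamma_{e2e} = \frac{\Theta}{\delta\Theta+1},
\qquad
1+\gamma_{e2e} = \frac{(1+\delta)\Theta+1}{\delta\Theta+1},
\end{equation}
so that $\log_2(1+\gamma_{e2e}) = h(\Theta)$ with $h(t)=\log_2\bigl((1+\delta)t+1\bigr)-\log_2\bigl(\delta t+1\bigr)$. A direct computation gives
\begin{equation}
h''(t)\propto \frac{\delta^2}{(\delta t+1)^2}-\frac{(1+\delta)^2}{\bigl((1+\delta)t+1\bigr)^2}\leq 0,
\end{equation}
since $(1+\delta)(\delta t+1)\geq \delta\bigl((1+\delta)t+1\bigr)$ reduces to $1+\delta\geq\delta$. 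Hence $h$ is concave and $\bar{C}=\frac{1}{2}\e{h(\Theta)}\leq\frac{1}{2}h(\e{\Theta})=\frac{1}{2}\log_2\bigl(1+\frac{\mathcal{J}}{\delta\mathcal{J}+1}\bigr)$, which is the theorem with the correct $\mathcal{J}$ and no appeal to any approximation. To repair your argument you would need to replace the starting point (25) by this exact rewriting and replace the "push the second hop to infinity" step by the concavity/Jensen step.
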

where $\mathcal{J}$ is given by:
\begin{equation}
 \mathcal{J} = \e{\Theta},~~\Theta = \frac{\gamma_{1(m)}\gamma_{2(m)}}{(1+\kappa_2^2)\gamma_{2(m)} + C}   
\end{equation}
Using the following identities given by \cite{27}, eq.~(24) and \cite{26}, eq.~(2.24.1.1) and after some mathematical manipulations, $\mathcal{J}$ can be derived as follows:
\begin{equation}
\begin{split}
\mathcal{J} = \frac{\e{\gamma_{1(m)}}}
{4\pi\Gamma(\alpha)\Gamma(\beta)(1+\kappa_2^2)}\left[\frac{C(\alpha\beta)^2}{(1+\kappa_2^2)\mu_2}\right]^{\frac{\alpha+\beta}{4}}~~~~~~~~~~~~~~~~~~\\ \times~
\MeijerG[\Bigg]{5}{1}{1}{5}{-\frac{\alpha+\beta}{4}}{\frac{\alpha-\beta}{4}, \frac{\alpha-\beta+2}{4}, \frac{\beta-\alpha}{4}, \frac{\beta-\alpha+2}{4}, -\frac{\alpha+\beta}{4}}{\frac{C(\alpha\beta)^2}{16\mu_2(1+\kappa_2^2)}}
\end{split}
\end{equation}
\subsection{Asymptotic Analysis}
In this subsection, we study the behavior of the SNDR and the ergodic capacity for a high SNR regime.\\
For the ideal case, the SNDR can be written as:
\begin{equation}
\gamma_{e2e} = \frac{\gamma_{1(m)}\gamma_{2(m)}}{\gamma_{2(m)} + \e{\gamma_{1(m)}} + 1}
\end{equation}
For a high SNR range, the SNDR is not upper bounded as shown below:
\begin{equation}
\lim_{\mu_1,\mu_2\to\infty} \gamma_{e2e} = \infty
\end{equation}
We can see that for the ideal case, i.e., without hardware impairments, SNDR has no ceiling which means that the performance is not limited to a high-rate system.\\ 
Considering the case of hardware impairments, the SNDR is upper bounded by a ceiling at the high SNR regime which is shown as follows:
\begin{equation}
\lim_{\mu_1,\mu_2\to\infty}\gamma_{e2e} = \frac{1}{\delta}
\end{equation}
We observe that the SNDR converges to a constant ceiling $\gamma^* = \frac{1}{\delta}$.
\newtheorem{thm}{Theorem}
\newtheorem{cor}[thm]{Corollary}
\begin{cor}
Assuming that $\mu_1$, $\mu_2$ largely grow up and the electrical and optical channels are mutually independant, the system capacity converges to a ceiling given by $C^* = \frac{1}{2}\log_2(1+\gamma^*)$. 
\end{cor}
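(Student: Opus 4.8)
The plan is to derive the corollary directly from the ergodic-capacity approximation in eqs.~(25)--(27), together with the SNDR ceiling $\gamma^* = 1/\delta$ already established in eq.~(35). First I would recall that, by eq.~(25), the capacity is the \emph{continuous} function $\bar{C} \approx \frac{1}{2}\log_2(1+\varrho)$ of the single scalar $\varrho$ given in eq.~(27). Since $x \mapsto \frac{1}{2}\log_2(1+x)$ is continuous, it suffices to show that $\varrho \to \gamma^*$ as $\mu_1,\mu_2 \to \infty$ and then pass the limit through the logarithm.

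Next I would track how each factor in $\varrho$ scales with the average SNRs. Writing $A = \e{\gamma_{1(m)}}$ and $B = \e{\gamma_{2(m)}}$, eq.~(14) shows that $A$ is proportional to $\mu_1$ and eq.~(28) shows that $B$ is proportional to $\mu_2$, so both diverge. The constant $C = A(1+\kappa_1^2)+1$ therefore also diverges, but only \emph{linearly} in $\mu_1$. The mutual-independence hypothesis is exactly what lets me keep the numerator and the leading denominator term in the factored product form $AB$, rather than $\e{\gamma_{1(m)}\gamma_{2(m)}}$; this is precisely the structure already present in eq.~(27).

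Then I would divide numerator and denominator of $\varrho$ by $AB$, obtaining
$$\varrho = \frac{1}{\delta + \dfrac{1+\kappa_2^2}{A} + \dfrac{1+\kappa_1^2}{B} + \dfrac{1}{AB}}.$$
The key observation is that the mixed product term $\delta AB$ in the denominator grows like $\mu_1\mu_2$, strictly faster than the linear contributions $B(1+\kappa_2^2)\sim\mu_2$ and $C\sim\mu_1$; hence every term except $\delta$ vanishes and $\varrho \to 1/\delta = \gamma^*$. Substituting into the continuous outer function yields $\lim_{\mu_1,\mu_2\to\infty}\bar{C} = \frac{1}{2}\log_2(1+\gamma^*) = C^*$, which is the claimed ceiling and is fully consistent with the SNDR limit in eq.~(35).

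I expect no serious obstacle here; the corollary is essentially a continuity-plus-dominant-term argument. The only point requiring genuine care is that $C$ is \emph{not} truly constant in the asymptotic regime but itself grows linearly with $\mu_1$ through $\e{\gamma_{1(m)}}$, so one cannot simply discard it as negligible by inspection. The single step I would state explicitly to make the limit rigorous is the verification that this linearly growing constant is still dominated by the quadratic-growth product term $\delta AB$, which is what forces the $\frac{1+\kappa_1^2}{B}$ contribution to vanish.
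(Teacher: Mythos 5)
Your proof is correct, but it takes a genuinely different route from the paper's. The paper argues probabilistically: it invokes the pointwise SNDR ceiling $\gamma_{e2e}\to\gamma^*=1/\delta$ of eq.~(35) and then appeals to the dominated convergence theorem (the uniform bound $\gamma_{e2e}\le 1/\delta$ supplies the dominating function) to exchange the limit with the expectation in $\tfrac{1}{2}\log_2(1+\e{\gamma_{e2e}})$, finishing by continuity of the logarithm. You instead work entirely with the deterministic ratio-of-means quantity $\varrho$ of eq.~(27), divide through by $\e{\gamma_{1(m)}}\e{\gamma_{2(m)}}$, and verify term by term that every correction to $\delta$ vanishes --- including the only delicate one, the constant $C=\e{\gamma_{1(m)}}(1+\kappa_1^2)+1$, which grows linearly in $\mu_1$ but is dominated by the quadratically growing product $\delta\e{\gamma_{1(m)}}\e{\gamma_{2(m)}}$, exactly as you flag. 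Your argument is more elementary (no measure-theoretic tool is needed) and is arguably better matched to the capacity expression the paper actually derived, since eqs.~(26)--(27) define $\bar{C}$ through $\varrho$ rather than through $\e{\gamma_{e2e}}$; the paper's DCT route, on the other hand, applies directly to $\e{\gamma_{e2e}}$ without relying on the ratio-of-means approximation, so it addresses the exact averaged SNDR rather than its surrogate. Both proofs land on the same ceiling $C^*=\tfrac{1}{2}\log_2(1+\gamma^*)$.
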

\begin{proof}
Since the SNDR converges to $\gamma^*$, the dominated convergence theorem permits to move the limit inside the logarithm function as follows:
\begin{equation}
\begin{split}
\lim_{\mu_1,\mu_2\to\infty}\frac{1}{2}\log_2(1 + \e{\gamma_{e2e}}) = \frac{1}{2} \log_2(1 + \lim_{\mu_1,\mu_2\to\infty}\e{\gamma_{e2e}})\\
= \frac{1}{2}\log_2(1+\gamma^*)~~~~~~~~~~~~~~~~~
\end{split}
\end{equation}
\end{proof}
Note that for the ideal case, the ergodic capacity grows infinitely without an upper bound for a high SNR regime and so we can see clearly the impairment limitations effect on the system performance.
\section{Numerical results}
This section presents numerical and simulation results of the outage probability and the ergodic capacity dependence on the system parameters. We use Monte-Carlo simulation to validate the numerical results.
\vspace*{-0.5cm}
\begin{center}
\includegraphics[width=9.7cm,height=7cm]{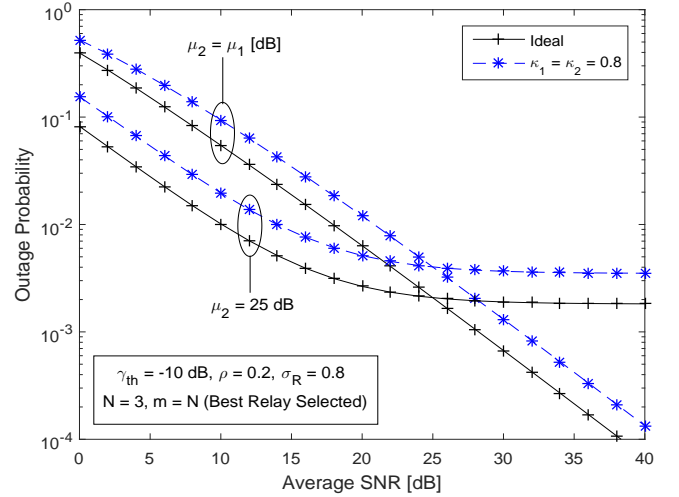}
\captionof{figure}{Outage probability versus the average SNR for different values of impairment ($\kappa_1, \kappa_2$) and average electrical SNR of the optical channel.}
\label{fig1}
\end{center}
\vspace*{-0.5cm}
Fig. 1 shows the outage probability dependence on the average SNR for different impairment values and average electrical SNR of the optical channel. If we increase the impairment values, the outage performance deteriorates and if we decrease the impairment values, the outage performance improves. Assuming $\mu_2$ = 25 dB, we note that increasing the average SNR of the RF link results in the existence of the outage floor, i.e., increasing the RF SNR  has no effect on the outage probability. However, if we assume $\mu_2$ grows simultaneously with $\mu_1$, the system performance improves better after the 25 dB value and the outage floor disappears.
\begin{center}
\includegraphics[width=9.7cm,height=7cm]{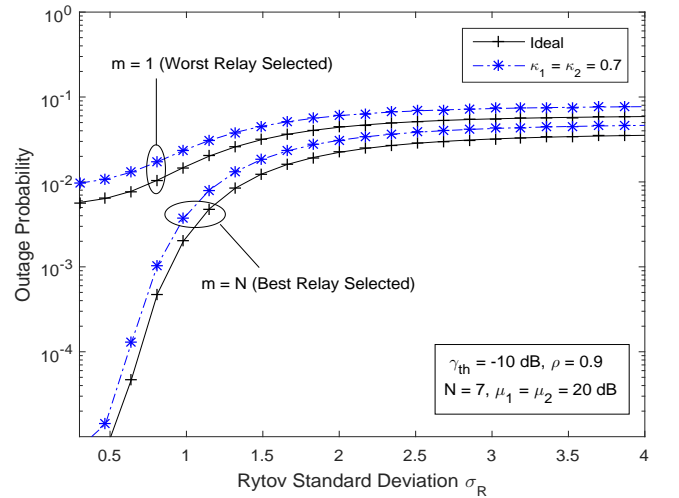}
\captionof{figure}{Outage probability versus $\sigma_R$ for different values of impairment ($\kappa_1$, $\kappa_2$) and for the best and worst scenarios.}
\label{fig1}
\end{center}

The variations of the outage probability with respect to Rytov standard deviation are shown in Fig. 2. Both, the best and worst relay selection scenarios are presented for different impairment values. For weak turbulence conditions, the outage probability is better when the best relay is available to forward the communication. For this case, decreasing the impairment levels results in further improvement of the outage performance. If we select the worst relay and decrease the impairment levels, the outage probability does not significantly improve in comparison with the outage performance of the best relay selection scenario. However, either selecting the best relay or decreasing the impairment levels has no effect on the system performance for strong turbulence conditions.
\vspace*{-0.5cm}
\begin{center}
\includegraphics[width=9.7cm,height=7cm]{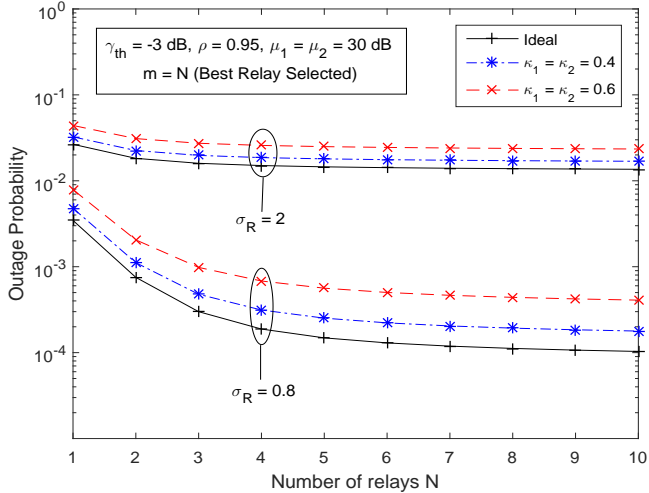}
\captionof{figure}{Outage probability versus the number of relays $N$ for different values of impairment ($\kappa_1$, $\kappa_2$) in different atmospheric turbulence conditions.}
\label{fig1}
\end{center}
\vspace*{-1cm}
\begin{center}
\includegraphics[width=9.7cm,height=7cm]{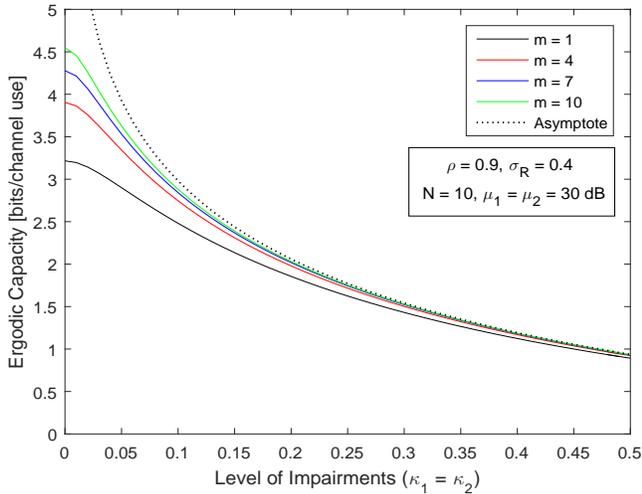}
\captionof{figure}{Ergodic capacity versus the level of impairments ($\kappa_1$, $\kappa_2$) for different ranks of the selected relay.}
\label{fig1}
\end{center}
Fig. 3 shows the outage probability dependence on the number of relays for different values of impairment. The performance is presented for weak and moderate atmospheric tubulence conditions. We observe that for weak turbulence conditions, the outage probability improves when increasing the number of relays. In addition, the number of relays can have a significant role at low impairment levels. However, for moderate turbulence conditions, either increasing the number of relays or decreasing the impairment levels has no significant effect on the system performance compared to the performance under weak turbulence conditions.
\vspace*{-0.5cm}
\begin{center}
\includegraphics[width=9.7cm,height=7cm]{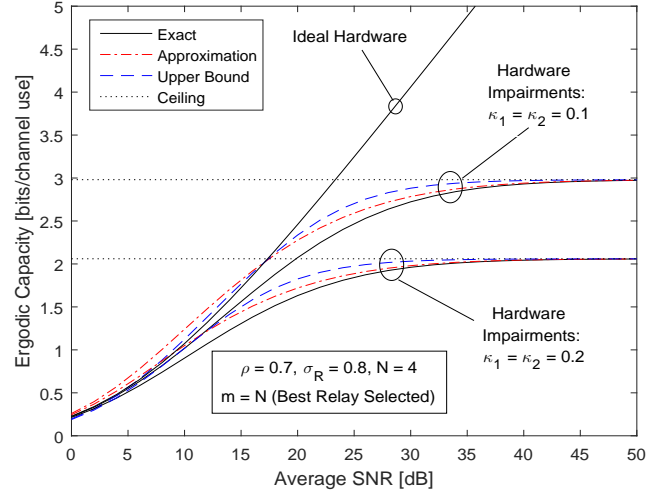}
\captionof{figure} {Ergodic capacity versus the average SNR for different values of impairment ($\kappa_1$, $\kappa_2$).}
\label{fig1}
\end{center}
\vspace*{-0.5cm}
Fig. 4 shows the ergodic capacity dependence on the levels of impairment for the best and worst relay selection scenarios. For low impairment levels, the best relay selection scenario shows a higher performance compared to the worst selection scenario. However, increasing the impairment levels results in neglecting the effect of the best selection scenario and thus leading to the same performance for both scenarios.
\begin{center}
\includegraphics[width=9.7cm,height=7cm]{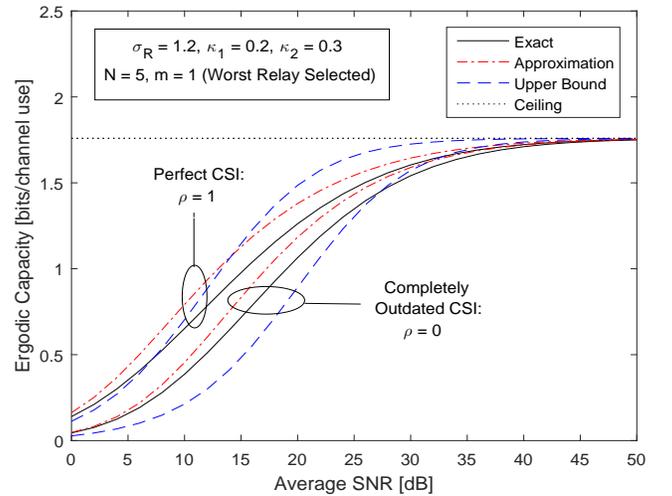}
\captionof{figure}{Ergodic capacity versus the average SNR for different values of the time correlation $\rho$.}
\label{fig1}
\end{center}
Fig. 5 shows the dependence of the ergodic capacity on the average SNR for different impairment values. As expected, the system performs better under low impairment levels. For the ideal case, we observe that the capacity grows infinitely without an upper bound contrary to the non-ideal case where the capacity is limited by a capacity ceiling $C^{*}$ eq.~(29) for a high SNR regime.\\
 If we assume a perfect CSI estimation ($\rho$ = 1), the complete correlation between the actual CSI and the CSI employed for the relay selection is achieved, i.e., the selection of the best relay is certainly achieved, and hence the system performs better and achieves high performance. Whereas, when the CSI used for the relay selection is completely outdated ($\rho$ = 0), this CSI and the CSI used for transmitting the signal are fully uncorrelated and hence there is a low probability that the selected relay is the best one leading to deteriorating the system performance.
\section{Conclusion}
In this work, we introduced an aggregate model of impairments to a mixed RF/FSO system of multiple relays. We also used the Amplify-and-Forward scheme with a fixed gain relaying. In addition, we proposed the PRS protocol assuming that the best relay is not always available to forward the signal. Moreover, we investigate the effects of the outdated CSI, the average electrical SNR of the optical link, the rank of the selected relay, the number of relays and the impairment levels under weak, moderate and strong atmospheric turbulence conditions. The results showed that the system performance depends significantly on the state of the optical link in terms of the optical SNR and atmospheric tubulence conditions. We also proved the existence of the ceilings for the effective SNDR and the capacity which limit the system performance, i.e., the system saturates by the hardware impairments contrary to the ideal case where the system performance grows infinitely without limits. In the future, we plan to extend this work following three major research directions. First, we will consider a more complex optical fading that takes into account the pointing errors and the atmospheric path loss. Second, it is important to study the different relaying protocols and their effects on the system performance. Finally, deriving analytical expressions of the Symbol Error Rate (SER) for different types of modulation is important to get an accurate analysis of the system.
\bibliographystyle{IEEEtran}

\bibliography{main}

\end{document}